\newtheorem{definition}{Definition}
\newtheorem{theorem}{Theorem}
\newtheorem{example}{Example}
\newtheorem{lemma}{Lemma}
\definecolor{darkgreen}{RGB}{0,127,0}
\title{\LARGE \bf
Control Synthesis in Partially Observable Environments for Complex Perception-Related Objectives 
}
\author{Zetong Xuan and Yu Wang
\thanks{Zetong Xuan and Yu Wang are with the Department of Mechanical \& Aerospace Engineering,
        University of Florida, Gainesville, FL 32611, USA; 
         {\tt\small  z.xuan@ufl.edu, yuwang1@ufl.edu}}%
}
\begin{document}

\maketitle
\thispagestyle{empty}
\pagestyle{empty}

\begin{abstract}
Perception-related tasks often arise in autonomous systems operating under partial observability. 
This work studies the problem of synthesizing optimal policies for complex perception-related objectives in environments modeled by partially observable Markov decision processes. 
To formally specify such objectives, we introduce \emph{co-safe linear inequality temporal logic} (sc-iLTL), which can define complex tasks that are formed by the logical concatenation of atomic propositions as linear inequalities on the belief space of the POMDPs. 
Our solution to the control synthesis problem is to transform the \mbox{sc-iLTL} objectives into reachability objectives by constructing the product of the belief MDP and a deterministic finite automaton built from the sc-iLTL objective. 
To overcome the scalability challenge due to the product, we introduce a Monte Carlo Tree Search (MCTS) method that converges in probability to the optimal policy. 
Finally, a drone-probing case study demonstrates the applicability of our method. 
\end{abstract}


\section{Introduction}
\label{sec:intro}

Perception-related tasks are commonly encountered in the operation of autonomous systems in partially observable environments. 
For example, consider a drone with a limited field of view and imperfect onboard sensors. 
Tasks such as confidently detecting a target or avoiding obstacles are perception-related because they rely on estimation of the environment rather than on direct observation. 
Beyond that, many real-world applications require complex rule\mbox{-}based tasks that consist of sequentially ordered perception-related subtasks. 
For instance, a surveillance drone might be required to continuously monitor a moving target (liveness) while simultaneously avoiding obstacles (safety). 
These sequential subtasks necessitate a precise formulation of each {perception\mbox{-}related objective} as well as their correct ordering to achieve the overall mission. 

In practice \cite{thrun2005probabilistic,kurniawati2022partially}, the perception in such partially observable environments is typically captured by \emph{beliefs} \cite{kaelbling1998planninga} in the partially observable Markov decision process (POMDP). 
A POMDP generalizes a Markov decision process (MDP) by assuming that the true MDP states, referred to as hidden states, are partially observable through a probabilistic relation to a set of observations. 
The perception of the environment is represented by the beliefs, which are the optimal estimation of the probability distribution over hidden states based on all previous and current observations. 
Thus, perception-related tasks can be expressed by reaching or avoiding certain regions over the space of all beliefs (belief space). 

In POMDPs, defining perception-related rule-based tasks requires formal logic over beliefs.
In this work, we introduce \emph{co-safe linear inequality temporal logic} (sc-iLTL) \cite{kwon2004linear} to describe perception-related objectives. 
When the rule-based task involves the sequential ordering of perception-related events, sc-iLTL specifies the corresponding temporal relations over beliefs. 
Sc-iLTL modifies standard linear temporal logic (LTL)~\cite{baier2008principles} 
i) by expressing each atomic proposition as a linear inequality over the belief space, making the formulas better suited for perception-related tasks; and  
ii) by restricting to the co-safe fragment~\cite{kupferman2001model} of iLTL~\cite{kwon2004linear}, which aligns naturally with finite-horizon tasks. 
This second modification adopts co-safe LTL~\cite{kupferman2001model}.  
Here, sc-LTL refers to the co-safe fragment of LTL, which consists of formulas satisfiable by finite path prefixes, making it well-suited for finite-horizon tasks. 
In contrast, general LTL formulas may require infinite-length paths to determine satisfaction, which is less appropriate in such contexts.

Synthesizing an optimal policy for an sc-iLTL objective is challenging due to scalability issues. 
This problem is as hard as solving a reachability problem in the continuous belief space, which is known to be hard due to the large state space~\cite{lavaei2022automated}. 
The reachability problem is formulated by augmenting the original model with an automaton that tracks the satisfaction status of the sc-iLTL objective.
However, as we define the objectives directly over the space of all beliefs, the resulting reachability problem cannot be solved efficiently using point-based methods~\cite{bouton2020pointbased}, which rely on the $\alpha$-vector representation of the value function.
Such representation assumes the value function is piecewise linear or convex in the belief space. This key assumption does not hold in our case.

To address the scalability challenge, we propose using Monte Carlo tree search (MCTS) \cite{silver2010montecarlo}. 
Methods like MCTS or real time dynamic programming (RTDP)\cite{geffner1998solving} mitigate the scalability issue by leveraging the structured dynamics of belief updates. 
Specifically, given finite hidden states, the number of possible next beliefs remains finite even though the belief space is continuous.
MCTS applies best-first search to a finite number of beliefs that are only reachable from the initial beliefs, thus achieving scalability similar to efficient point-based methods. 
We prefer MCTS over RTDP because MCTS avoids explicit discretization, making its complexity independent of the size of the belief space. 

Our main contribution is to propose an MCTS method for synthesizing an optimal policy.  
By constructing the belief MDP, which captures the transitions of beliefs, we first lift the control synthesis problem to the belief space. 
Then, we build a product belief MDP by augmenting the belief MDP with a deterministic finite automaton (DFA) encoding the satisfaction status of the sc-iLTL objective. 
This product construction transforms the history-dependent sc-iLTL objective into a reachability objective on the product belief MDP.
Consequently, we show a Markovian policy that maximizes the reachability probability in the product belief MDP corresponds to a non-Markovian policy that maximizes the sc-iLTL satisfaction probability in the original POMDP.
Finally, we apply the MCTS method to compute the optimal Markovian policy, which can then be mapped back to an optimal non-Markovian policy for the original POMDP.

\paragraph*{Related work}
Our work relates to control synthesis for LTL objectives in POMDPs. 
Most existing works~\cite{chatterjee2015qualitative, ahmadi2020stochastic, sharan2014formal,bouton2020pointbased} define LTL objectives over hidden states, thus \mbox{unsuitable} for perception-related objectives. 
Some studies~\cite{vasile2016control, haesaert2018temporal} have explored perception-related objectives defined as distribution temporal logic (DTL)~\cite{jones2013distribution}, where nonlinear inequalities map beliefs into atomic propositions. 
However, these works focus on linear Gaussian POMDPs, a subclass of POMDPs where beliefs are represented as a Gaussian distribution characterized by its mean and covariance. 
In contrast, we target general POMDPs where beliefs cannot be parametrized this way.

\section{Preliminaries} \label{sec:prelim}

This section introduces POMDP, belief MDP. 
POMDP models the probabilistic transition of the hidden states and the probabilistic outcome of receiving an observation, given an action. 
Belief MDPs with continuous belief space capture the transition of optimal estimations of hidden states. 
 
\subsection{POMDP}

We model an autonomous system's dynamics in the unknown and partially observable environment by a POMDP.
\begin{definition}
    A POMDP is a tuple $\mathcal{M}_\mathcal{P} = (S, A, T_\mathcal{P}, s_\mathrm{init}, \Omega, O)$, where 
    i) $S$ is a finite set of hidden states, 
    ii) $A$ is a finite set of actions, 
    iii) $T_\mathcal{P}: S \times A \times S \to [0,1]$ is the transition probability function such that for all $s\in S$, we have 
    $\sum_{s'\in S}T_\mathcal{P}(s,a,s') = 1$, if $a \in A(s)$ and $\sum_{s'\in S}T_\mathcal{P}(s,a,s') = 0$, if $a \notin A(s)$, 
    iv) $s_\mathrm{init}$ is the initial distribution of the hidden state, 
    v) $\Omega$ is a finite set of observations, 
    where $\Omega(s)$ denotes the set of possible observations in the state $s\in S$,
    vi) $O:S\times \Omega \to [0,1]$ is the observation probability function such that for all $s\in S$, we have
    $\sum_{o\in \Omega}O(s,o) = 1$. 
\end{definition}

We call a sequence of states $\sigma_\mathcal{P}: \mathbb{N} \to S$ a \textit{path} of the POMDP if for any $t \in \mathbb{N}$, there exists $a \in A$ such that $T_\mathcal{P}({\sigma_\mathcal{P}}_{t}, a, {\sigma_\mathcal{P}}_{t+1}) > 0$. 
 
\subsection{Belief MDP} \label{sub:belief mdp}

The perception of the hidden POMDP states can be derived from the past actions and observations as follows.
Let $h_t = \{a_0, o_0,... a_{t-1}, o_{t-1}\}$ with $h_0 = \emptyset$ be a \textit{history} of actions and observations. The \emph{belief} (i.e., probabilistic estimation) of the current hidden states can be derived by the map $b_t = \mathcal{B}(h_t)$, formally defined as
\begin{align} \label{eq:belief update}
b_t (s) = \frac{ O(s, o_t)\sum_{s' \in S}b_{t-1}(s') T_\mathcal{P}(s',a_t,s)} {\sum_{s\in S}O(s, o_t)\sum_{s' \in S}b_{t-1}(s') T_\mathcal{P}(s',a_t,s)},
\end{align}
for any $s\in S$ and initial estimation $b_0(s) = s_\mathrm{init}(s)$.

Given the history, the transition of belief states with respect to actions can be treated as a continuous state MDP called \emph{belief MDP}. 

\begin{definition} \label{def:belief mdp}
The belief MDP $\mathcal{M}_\mathcal{B}$ of the POMDP $(S, A, \allowbreak T_\mathcal{P},  s_\mathrm{init}, \Omega, O)$ is defined by $\mathcal{M}_\mathcal{B} = (B, A, T_\mathcal{B}, b_0)$ where
i) $B := \Delta(S)$ is the belief space, 
ii) $A$ is the set of actions, 
iii) $T_\mathcal{B}:B\times A \times B \to [0,1]$ is the transition probability function 
$T_\mathcal{B} (b, a, b') = \sum_{o \in O} \sum_{s' \in S} \sum_{s \in S} \eta(b, o, b') O (s', o) T_\mathcal{P}(s, a, s') b(s)$, 
\begin{align}
\eta(b, o, b') = \bigg\{
\begin{array}{ll}
1, & \text {if belief update~\eqref{eq:belief update}  for } b, o \text{\, returns } b' \notag 
\\ 0, & \text {otherwise,}\end{array}          
\end{align}
iv) $b_0 = s_\mathrm{init}$.  
\end{definition}
A sequence of belief states $\sigma: \mathbb{N} \to \Delta(S)$ is a \textit{belief path } of the belief MDP if for any $t \in \mathbb{N}$, there exists $a \in A$ such that $T_\mathcal{B}(\sigma_{t}, a, \sigma_{t+1})) > 0$. 

Typically, a policy on POMDP is synthesized in belief space \cite{kurniawati2022partially}. 
Note that a \textit{Markovian policy} $\pi(b_t, t) \in A$ only uses the current belief state and time step as input.
In contrast, a \textit{non-Markovian policy} may use the entire sequence of previous beliefs as input.

\section{Problem Formulation and Main Result}\label{sec:pf}
We now introduce sc-iLTL as control objectives. 
To motivate its use, we first present an example to show that some objectives can be specified on the belief path $\sigma$ but are hard to specify on the hidden state path $\sigma_\mathcal{P}$. 
We then formally define our control objective and state the main result of solving it. 
 
\subsection{Motivating example}\label{sec:example}
The common way we specify the control objectives on a POMDP is through hidden states, which may not be expressive enough for a perception-related task.
\begin{example}
Consider a drone-probing task in a grid world (inspired by \cite{svorenova2015temporal}). The drone needs to locate a ground target on the grid using an imperfect sensor. 
The \textbf{sensor} only provides the respective quadrant of the target within the field of view, that is, adjacent to or under the drone. 
The observation set is $\Omega = \{ \mathrm{SW},\mathrm{NW},\mathrm{NE},\mathrm{SE},\mathrm{None}\}$, where the first four stand for the respective quadrant and ``$\mathrm{None}$" means the target is not in the field of view. Suppose the target is at the adjacent grid north of the drone, then the sensor returns ``$\mathrm{NE}$" or ``$\mathrm{NW}$" with equal probability, and if the target is under the drone, then the sensor returns an observation in $\{ \mathrm{SW},\mathrm{NW},\mathrm{NE},\mathrm{SE}\}$ with equal probability. 

In the POMDP model, the hidden state encodes the positions of both the drone and the ground target.  
Whereas the belief state is the optimal estimation of the distribution of ground target position.
We can specify such a task as increasing the maximum element of the belief state to a confidence threshold as $\Vert b \Vert_\infty \ge c$.
\end{example}

This task is difficult to define over hidden states, although it is naturally defined over beliefs,
since $\Vert b \Vert_\infty \ge c \notag$ requires the drone to infer the unknown location of the target by belief update \eqref{eq:belief update}. 
It is hard to encode the belief update using traditional control objectives like reward $r(s,a):S\times A \to \mathbb{R}$ or LTL with atomic propositions defined on hidden states. 
 
\subsection{Perception-related objectives in sc-iLTL}
We express the perception-related objective using the \mbox{sc-iLTL} formula \cite{kwon2004linear} constructed via atomic propositions defined on belief states. 
The definition is similar to the standard LTL formula \cite{baier2008principles} except for atomic propositions. 
An iLTL formula is derived recursively from the rules
\begin{align}
    \label{def:iltl syntax}
    \varphi &\Coloneqq \mathrm{true}
    \mid \mathrm{ineq}
    \mid \neg \varphi
    \mid \varphi_1 \land \varphi_2
    \mid \bigcirc \varphi 
    \mid \varphi_1 \mathcal{U} \varphi_2, \, \mathrm{ineq}\in \Lambda \notag,
\end{align}
where (next) $\bigcirc \varphi$ holds if $\varphi$ holds in the next ``step,'' and (until) $\varphi_1 \mathcal{U} \varphi_2$ holds if $\varphi_1$ holds at all moments until a future moment for which $\varphi_2$ holds. 
Other propositional and temporal operators can be derived from previous operators, e.g., (or) $\varphi_1 \vee \varphi_2 \coloneqq \neg(\neg \varphi_1 \wedge \neg \varphi_2)$, (eventually) $\lozenge \varphi \coloneqq {\rm true} \cup \varphi$.

We focus on the co-safe fragment of iLTL formulas, which can be verified via finite prefixes.  
We write $\sigma \models \varphi$ to denote that the infinite belief path $\sigma$ satisfies $\varphi$.  
An iLTL formula $\varphi$ is said to be co-safe~\cite{lacerda2014optimal} if, for every infinite sequence $\sigma \models \varphi$, there exists a finite prefix $\sigma_{\leq k} = b_0 b_1 \cdots b_k$, $k \in \mathbb{N}$, such that $\sigma_{\leq k} \cdot \sigma' \models \varphi$ for any infinite sequence $\sigma'$.
Sc-iLTL restricts formulas so that $\neg$ is applied only to atomic propositions, and only the temporal operators $\bigcirc$ (next), $\mathcal{U}$ (until), and $\lozenge$ (eventually) are allowed.
Let $\sigma: \mathbb{N} \to \Delta(S)$ be a belief path. The satisfaction of an sc-iLTL formula follows the standard semantics of co-safe LTL~\cite{lacerda2014optimal}, except for the interpretation of atomic propositions, 
\begin{align}
    \sigma \models \mathrm{ineq} \quad \text{iff} \quad p^T \sigma_0 > c, \quad \text{with } p \in \mathbb{R}^{|S|},\; c \in \mathbb{R}. \notag
\end{align}

\noindent\textbf{Problem Formulation}: 
Given a POMDP $\mathcal{M}_\mathcal{P} = (S, A, \allowbreak T_\mathcal{P}, s_\mathrm{init}, \Omega, O)$ and an sc-iLTL objective $\varphi$ over the belief path $\sigma$. Find an optimal policy $\pi$ (which may depend on all history observations and actions) that maximizes the satisfaction probability $Pr_\pi(\sigma \models \varphi)$. 

\subsection{Overview of technical solutions}
Solving this control synthesis problem, where the objective is defined over the belief space, is challenging due to the non-Markovian nature of the optimal policy and the scalability issues involved in finding it.
The optimal policy is typically non-Markovian, as the sc-iLTL formula specifies a sequential ordering of events.
At the same time, control synthesis in the continuous belief space inherently suffers from scalability issues.

We can transform the sc-LTL control problem into a maximum reachability problem, for which a Markovian policy can be optimal.
This transformation is achieved by constructing the product of the belief MDP and a graph representation of the sc-iLTL objective.
Specifically, we use DFA as a graph representation of an sc-iLTL objective, thus simplifying the complex rule-based objective into reachability. 
Given an sc-iLTL objective $\varphi$, we can construct a DFA $\mathcal{A}_\varphi$ (with labels $\Sigma=2^{\Lambda}$) such that a belief path $\sigma \models \varphi$ if and only if $\sigma$ is accepted by the DFA $\mathcal{A}_\varphi$.
\begin{definition}
A DFA is a tuple $\mathcal{A} = (Q, \Sigma,\delta, q_0, F)$  where
i) $Q$ is a finite set of automaton states, 
ii) $\Sigma$ is a finite set of alphabets, 
iii) $\delta: Q \times  \Sigma \to 2^Q$ is a (partial) transition function, 
iv) $q_\mathrm{init} \in Q$ is the initial automaton state, 
v) $F\subseteq Q$ is a set of final automaton states.
\end{definition}

The DFA verifies the belief path by a labeling function $L : B \to 2^\Lambda$, where $ \mathrm{ineq} \in L(b)$ if and only if $\mathrm{ineq}$ holds on $b$. 
The labeling on the belief path moves the automation states.  
A belief path $\sigma$ is accepted by $\mathcal{A}_\varphi$ if and only if there exists $k\in \mathbb{N}$ such that the prefix $\sigma_{\leq k}$ moves the initial automaton state $q_0$ to a final state $q\in F$. 

We use MCTS \cite{silver2010montecarlo} to address the scalability issue when solving the transformed maximum reachability problem. 
Instead of exploring the entire belief space, MCTS focuses only on beliefs that are reachable from the initial belief.
This is achieved by simulating a history MDP, which effectively reduces the continuous belief space to a discrete set of reachable beliefs.
\begin{definition}
The history MDP $\mathcal{M}_\mathcal{H}$ \cite{li2024reinforcement} of the POMDP $(S, A, T_\mathcal{P}, s_\mathrm{init}, \Omega, O)$ is defined as $\mathcal{M}_\mathcal{H} = (H, A, T_\mathcal{H}, h_0)$, where 
i) $H$ is the discrete state space consisting of all possible history states $h_t$,  
ii) $A$ is the action set,  
iii) $T_\mathcal{H} : H \times A \times H \to [0,1]$ is the transition probability function. For $h \in H$, $a \in A$, and $o \in \Omega$, the next history is defined as $h' = h \cdot a \cdot o$, where $\cdot$ denotes sequence concatenation, and $
T_\mathcal{H}(h, a, h') = \sum_{s' \in S} \sum_{s \in S} O(s', o)\, T_\mathcal{P}(s, a, s')\, \mathcal{B}(h)(s)$. 
iv) $h_0 = \emptyset$ is the initial (empty) history state. 
\end{definition}
Although the number of history states grows exponentially with the length of the belief path, MCTS mitigates this issue via best-first search. 
In particular, the sample complexity is independent of the size of the state space. 

Finally, the main result is formally stated below with proof deferred to the end of Section~\ref{sec:MCTS}. 

\begin{theorem} \label{thm:4}
    Algorithm \ref{alg:example} finds the optimal policy on the POMDP that maximizes the satisfaction probability of the given sc-iLTL objective. 
\end{theorem}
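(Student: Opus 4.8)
The plan is to prove Theorem~\ref{thm:4} in two independent parts that mirror the two technical steps of the paper. First, I would show that the product construction exactly reduces the sc-iLTL satisfaction-maximization problem to a maximum-reachability problem on the product belief MDP, for which a \emph{Markovian} policy attains the optimum. Second, I would show that the MCTS procedure of Algorithm~\ref{alg:example} converges, in probability as the number of simulations grows, to this optimal Markovian policy, which is then mapped back to an optimal non-Markovian policy on the original POMDP. Chaining these two facts gives the statement.

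For the first part, I would begin by showing that the pair $(b_t, q_t)$, where $b_t = \mathcal{B}(h_t)$ is the current belief and $q_t$ is the DFA state reached by reading the labels $L(b_0)\cdots L(b_{t-1})$ along the prefix, is a \emph{sufficient statistic} for the satisfaction of $\varphi$. The key lemma, proved by induction on $t$, is that the DFA reaches a final state $q_t \in F$ for some $t$ if and only if the belief path $\sigma$ is accepted by $\mathcal{A}_\varphi$, which by the defining property of $\mathcal{A}_\varphi$ holds iff $\sigma \models \varphi$. Because both the belief update~\eqref{eq:belief update} and the DFA transition $\delta$ are deterministic functions of the history, the product belief MDP is a genuine (continuous-state) MDP in which reaching the set $B \times F$ is an absorbing event. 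Standard results on maximum-reachability objectives then guarantee that the optimum over all history-dependent policies is achieved by a policy that is Markovian in the product state $(b,q)$. Finally, since every history $h_t$ determines $(b_t, q_t)$, I would exhibit the correspondence between Markovian policies on the product and non-Markovian policies on the POMDP and verify that it preserves the satisfaction probability $Pr_\pi(\sigma \models \varphi)$; this establishes that solving the reachability problem solves the original problem.

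For the second part, I would argue that Algorithm~\ref{alg:example} searches exactly the reachable portion of the product belief MDP by simulating the history MDP $\mathcal{M}_\mathcal{H}$: each sampled history $h$ is mapped to its belief $\mathcal{B}(h)$ and to its DFA state, so the search tree is precisely the tree of product states reachable from $(b_0, q_0)$. Since $A$ and $\Omega$ are finite, each node has finitely many children, and the co-safe property of $\varphi$ makes the accepting event reachable within finite prefixes; combined with the absorbing nature of final DFA states, the effective horizon is finite and the per-path reward (the reachability indicator) is bounded in $[0,1]$. Under these conditions the standard UCT convergence analysis~\cite{silver2010montecarlo} applies: the value estimates at the root converge in probability to the optimal reachability value and greedy action selection converges to an optimal Markovian action, which I then map back through the correspondence established above.

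The step I expect to be the main obstacle is the rigorous convergence claim for MCTS. The classical UCT guarantees are stated for discounted or fixed-finite-horizon problems with bounded rewards, whereas here the horizon at which an accepting (or trap) DFA state is hit is a random stopping time. The crux is therefore to show that this stopping time is almost surely finite, or to truncate at a horizon with vanishing tail, so that the reachability objective is well approximated by a bounded finite-horizon objective to which the UCT regret bounds transfer; one must also confirm that the branching on observations in the search tree carries the correct transition probabilities $T_\mathcal{H}$, so that the Monte Carlo estimates are unbiased for the product-MDP value. The sufficient-statistic and determinism arguments of the first part are comparatively routine once the inductive acceptance lemma is in place.
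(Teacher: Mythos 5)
Your proposal takes essentially the same route as the paper: your first part is the paper's Lemma~\ref{lem:to main} (the product construction turns sc-iLTL satisfaction into maximum reachability, with Markovian optimality in the product state $\langle b,q\rangle$ and the history-to-product correspondence), and your second part is the paper's Theorem~\ref{proof:MCTS} (UCT convergence transferred to the undiscounted reachability objective). The obstacle you flag at the end is exactly what the paper's Lemma~\ref{lem: fin to inf} resolves, and it is resolved via your truncation branch --- $V_D^\pi$ is non-decreasing in $D$ and bounded by $1$, hence converges pointwise to $V^\pi$, so running MCTS with $d_{max}=\Bar{D}$ and invoking the finite-horizon guarantees of \cite{kocsis2006bandit} suffices --- whereas your alternative branch (showing the hitting time of an accepting or trap state is a.s.\ finite) would fail in general, since nothing forces paths ever to reach such a state.
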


\section{MCTS for Sc-iLTL}
In this section, we explain our algorithm to solve the control synthesis problem for sc-iLTL objectives. 
First, we transform the sc-iLTL control problem into a reachability problem where a Markovian policy can be optimal by constructing the product belief MDP. 
Then, we apply MCTS to find the optimal Markovian policy.

\subsection{Product belief MDP}
We translate the control problem for the POMDP into a reachability problem on the product belief MDP. 
The transitions of the product MDP $\mathcal{M}_\mathcal{B}^\times $ are defined by combining the transitions of the belief MDP and the DFA. 
Each belief path $\sigma\models \varphi$ will have a corresponding path $\sigma^\times \models \lozenge F^\times$ on the product belief MDP $\mathcal{M}^{\times}$ and vice versa. 
\begin{definition} \label{def:product mdp}
A product belief MDP $\mathcal{M}_\mathcal{B}^{\times} = ( B^{\times}, \allowbreak A^{\times} , P^{\times},b_0^{\times}, F^{\times})$ of a belief MDP $\mathcal{M}_\mathcal{B} = (B, \allowbreak A, T_\mathcal{B}, b_0)$ and a DFA ${\mathcal{A}}= (Q,\Sigma,\delta,q_\mathrm{init},F)$ is defined by
i) the set of states $B^\times = B \times Q \cup \mathrm{sink}$, 
ii) the set of actions ${A}^\times = A$, 
iii) the set of final states $F^\times=\{\langle b,q \rangle \in B^{\times} |q\in F\}$, 
iv) the initial state $b_0^\times=\langle b_0,q_\mathrm{init} \rangle$, 
v) the transition probability function 
\begin{align}
    &T^\times_\mathcal{B}(b^\times ,a, b^{\times'}) \notag \\
    & = 
    \begin{cases}
    T_\mathcal{B}(b,a,b') &  b^\times = \langle b,q\rangle\notin F^\times,\, b^{\times'}=\langle b', q'\rangle, \\ & q' = \delta(q,L(b))\\
    1 &   b^\times \in F^\times,\, b^{\times'}=\mathrm{sink}\\
    1 &   b^\times = \mathrm{sink},\, b^{\times'}=\mathrm{sink}\\
    0 & \mathrm{else}.
\end{cases} 
\end{align} 
\end{definition}
Once $F^\times$ is met, the path on the product belief MDP falls to the $\mathrm{sink}$. 
This $\mathrm{sink}$ simplifies transitions after reaching $F^\times$, since we are only interested in the finite prefix of $\sigma^\times$.



\vspace{-5pt}
\subsection{Reachability problem on the product belief MDP}
Now we formulate the reachability problem on the product belief MDP. 
We introduce an undiscounted reward function 
$r(b^\times) = 1 \text{ if } b^\times \in F^\times, \text{ else } 0$.
such that the value function $V^{\pi^\times}(b^\times_0) := \mathbb{E}_\pi \sum_{i=0}^\infty r(b^\times_i)$, which is the expected return of the cumulative reward along all paths starting at $b^\times_0$, 
equals the reachability probability. 

\begin{lemma}\label{lem: V to R}
Given a policy $\pi^\times$, belief state $b$ and DFA state $q$, it holds that
\begin{align}
    V^{\pi^\times}(\langle b,q \rangle) = Pr_{\pi^\times}(\sigma^\times \models \lozenge F^\times | \sigma^\times_0 = \langle b,q \rangle). 
\end{align}
\end{lemma}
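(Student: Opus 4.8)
The plan is to exploit the absorbing $\mathrm{sink}$ structure of the product belief MDP so that the undiscounted cumulative reward along any sample path collapses to a single indicator, after which the expectation becomes a probability directly by definition. The argument is structural, not computational.

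First I would fix an arbitrary infinite path $\sigma^\times$ of $\mathcal{M}_\mathcal{B}^\times$ starting at $\langle b, q\rangle$ and examine its reward sequence $r(\sigma^\times_0), r(\sigma^\times_1), \dots$. The key fact from Definition~\ref{def:product mdp} is that every final state $b^\times \in F^\times$ transitions deterministically to $\mathrm{sink}$, and $\mathrm{sink}$ is itself absorbing; moreover $r$ vanishes on $\mathrm{sink}$ and on all non-final states, taking the value $1$ only on $F^\times$. Consequently, once a path enters $F^\times$ it is immediately routed to $\mathrm{sink}$ and can never return, so the reward $1$ is collected at most once along any path. I would then conclude that
\begin{align}
    \sum_{i=0}^\infty r(\sigma^\times_i) = \mathbf{1}\!\left[\sigma^\times \models \lozenge F^\times\right], \notag
\end{align}
the indicator of the event that the path eventually visits $F^\times$, including the degenerate case $q \in F$, where the reward at step $0$ and the reachability event both trigger at the initial state. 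This identity also shows the per-path sum is bounded by $1$, so the undiscounted value function is well-defined for every sample path.

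Taking the expectation under $\pi^\times$ with the initial state conditioned to $\langle b,q\rangle$, the definition $V^{\pi^\times}(\langle b,q\rangle) = \mathbb{E}_{\pi^\times}\sum_{i=0}^\infty r(\sigma^\times_i)$ turns the left-hand side into $\mathbb{E}_{\pi^\times}\!\left[\mathbf{1}[\sigma^\times \models \lozenge F^\times]\right]$, and the expectation of an indicator is exactly the probability $Pr_{\pi^\times}(\sigma^\times \models \lozenge F^\times \mid \sigma^\times_0 = \langle b,q\rangle)$, which yields the claim.

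Since there is no real calculation, the only point requiring care is the boundary case where the initial DFA component is already accepting, together with the formal justification that the reward is collected \emph{exactly} once rather than zero or more than once; this hinges entirely on the $\mathrm{sink}$ being the unique successor of every final state and on $\mathrm{sink}$ being absorbing with zero reward. I expect this absorption argument, rather than the expectation step, to be the one subtlety worth stating explicitly.
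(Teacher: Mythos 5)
Your proposal is correct and follows essentially the same argument as the paper's proof: the absorbing $\mathrm{sink}$ structure guarantees the reward $1$ is collected exactly once along paths visiting $F^\times$ and never otherwise, so the undiscounted sum is the indicator of $\lozenge F^\times$ and its expectation is the reachability probability. Your version is simply more explicit than the paper's (spelling out the path-wise indicator identity and the boundary case $q \in F$), which is a reasonable elaboration rather than a different approach.
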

\begin{proof}
Along a belief path visiting $F^\times$, reward $1$ will be collected only once. 
Along a belief path not visiting $F^\times $, no positive reward will be collected. 
Thus  $V^{\pi^\times}(b^\times_0) := \mathbb{E}_\pi \sum_{i=0}^\infty r(b^\times_i)$
equals the reachability probability to $F^\times$. 
\end{proof}

Given a policy, the value function satisfies the recursive formulation as the Bellman equation, when $b^\times = \mathrm{sink}$, $V^{\pi^\times}(b^\times) = 1$, 
when $b^\times \ne \mathrm{sink}$, 
\begin{align}\label{eqn: value}
    V^{\pi^\times}(b^\times) = \textstyle\sum_{b^{\times'}} T^\times_\mathcal{B}(b^\times ,\pi(b^\times), b^{\times'})  {V^{\pi^\times} (b^{\times'})} . 
\end{align}
The optimal value function representing the maximum reachability probability satisfies the Bellman optimality equation, when $b^\times = \mathrm{sink}$, $V^*(b^\times) = 1$, 
when $b^\times \ne \mathrm{sink}$, 
\begin{align}\label{eq:bellman star}
V^*(b^\times) = {\textstyle\max_{a\in A} {\sum_{b^{\times'}} }  T^\times_\mathcal{B}(b^\times ,a, b^{\times'})  {V^* (b^{\times'})} }. 
\end{align}


The policy $\pi^\times(\langle b,q \rangle)$ requires us to update the belief states and the DFA state at the same time. Given current belief and DFA state $b_t,q_t$, the action is $a_t = \pi^\times(\langle b_t,q_t \rangle)$. 
\begin{lemma}\label{lem:to main}
    Given sc-iLTL objective $\varphi$ and POMDP $\mathcal{M}_\mathcal{P}$, 
    the optimal policy $\pi^\times(\langle b,q \rangle)$ on $\mathcal{M}_\mathcal{B}^\times$ maximizing reachability probability is the optimal policy $\pi^\times(\langle b, q \rangle)$ on $\mathcal{M}_\mathcal{P}$ maximizing satisfaction probability of $\varphi$. 
\end{lemma}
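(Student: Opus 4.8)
The plan is to establish a measure-preserving bijection between belief paths of $\mathcal{M}_\mathcal{B}$ and paths of the product belief MDP $\mathcal{M}_\mathcal{B}^\times$, under which satisfaction of $\varphi$ corresponds exactly to reaching $F^\times$, and then to transfer the optimization through this correspondence. First I would fix a policy and observe that, because $\mathcal{A}_\varphi$ is deterministic, every belief path $\sigma = b_0 b_1 \cdots$ lifts to a unique product path $\sigma^\times = \langle b_0, q_0 \rangle \langle b_1, q_1 \rangle \cdots$ with $q_0 = q_\mathrm{init}$ and $q_{i+1} = \delta(q_i, L(b_i))$; conversely, projecting onto the first coordinate recovers $\sigma$. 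Since the DFA state at each step is a deterministic function of the belief history, this is a bijection on paths, and running a product policy $\pi^\times(\langle b, q\rangle)$ on $\mathcal{M}_\mathcal{P}$ induces a well-defined (generally history-dependent) policy on the POMDP, and vice versa.

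Second, I would show the bijection preserves probability. By the transition rule in Definition~\ref{def:product mdp}, for a non-final product state the belief coordinate evolves with probability $T_\mathcal{B}(b,a,b')$ while the DFA coordinate is updated deterministically by $q' = \delta(q, L(b))$. Hence, under corresponding policies, the probability assigned to each finite prefix of $\sigma^\times$ equals the probability assigned to the matching prefix of $\sigma$, so the two induced measures over paths agree. Third, I would invoke the defining property of $\mathcal{A}_\varphi$: a belief path $\sigma \models \varphi$ if and only if some prefix drives $q_\mathrm{init}$ into $F$, which by the path correspondence is exactly the event $\sigma^\times \models \lozenge F^\times$. Combining this with the measure-preservation yields $Pr_\pi(\sigma \models \varphi) = Pr_{\pi^\times}(\sigma^\times \models \lozenge F^\times)$, and by Lemma~\ref{lem: V to R} the right-hand side equals $V^{\pi^\times}(b_0^\times)$. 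Maximizing the reachability value on $\mathcal{M}_\mathcal{B}^\times$ is therefore equivalent to maximizing the satisfaction probability of $\varphi$ on $\mathcal{M}_\mathcal{P}$.

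The main obstacle is reconciling the two policy classes: the optimal policy on $\mathcal{M}_\mathcal{B}^\times$ is Markovian in $\langle b, q \rangle$, whereas an optimal policy on the POMDP is a priori history-dependent. The resolution I would pursue is that $\langle b, q \rangle$ forms a sufficient statistic for the joint problem: the belief $b = \mathcal{B}(h)$ summarizes the observation-action history for the POMDP dynamics, while $q$ records the satisfaction status of $\varphi$ along that history, so conditioning on $\langle b, q \rangle$ discards no information relevant to either the transition law or the objective. I would then appeal to the standard fact that maximum-reachability objectives on an MDP admit an optimal Markovian policy; since the supremum over history-dependent POMDP policies is by the above correspondence no larger than the value attainable over all product policies, the Markovian optimum on $\mathcal{M}_\mathcal{B}^\times$ attains it. Mapping $\langle b_t, q_t \rangle \mapsto \pi^\times(\langle b_t, q_t \rangle)$ back through the belief update~\eqref{eq:belief update} and the DFA update then produces the corresponding optimal, generally non-Markovian, policy on $\mathcal{M}_\mathcal{P}$.
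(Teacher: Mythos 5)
Your proposal is correct and follows essentially the same route as the paper's own proof: lift each belief path to its unique product path via the deterministic DFA, note that the product transition probabilities preserve the path measure, identify satisfaction of $\varphi$ with the event $\lozenge F^\times$, and conclude via Lemma~\ref{lem: V to R} and the Bellman optimality equation (i.e., Markovian optimality for reachability). The paper's proof is far terser and leaves implicit the points you spell out --- the measure-preserving bijection and the reconciliation of the Markovian product policy with the history-dependent POMDP policy class --- so your write-up is a faithful, more rigorous elaboration rather than a different argument.
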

\begin{proof}
    Given DFA $\mathcal{A}_\varphi$ with initial DFA state $q_0=q_\mathrm{init}$, following the same policy $\pi^\times$, the satisfaction probability on POMDP equals the reachability probability on product belief MDP, 
    $ Pr_{\pi^\times}(\sigma \models \varphi)   = Pr_{\pi^\times}(\sigma^\times \models \lozenge F^\times | \sigma^\times_0 = \langle \sigma_0,q_0 \rangle)$. 
    Since each belief path $\sigma$ corresponds to path $\sigma^\times$ where $\sigma^\times_0 = \langle \sigma_0, q_0\rangle$, 
    Lemma~\ref{lem: V to R} and \eqref{eq:bellman star} finish the proof. 
\end{proof}

\subsection{Monte Carlo tree search}\label{sec:MCTS}
Here, we use MCTS to find the optimal policy on the product belief MDP. 
Similar to \cite{silver2010montecarlo}, MCTS only searches a finite number of belief states that are reachable from $b_0$. 

We use an equivalent product history state $h^\times=\langle h ,q \rangle$ to represent each product belief state $b^\times =\langle \mathcal{B}(h) ,q \rangle$ by~\eqref{eq:belief update}. 
The transition between the product history states $T^\times_\mathcal{H}(h^\times ,a, h^{\times'})$ is derived from $T^\times_\mathcal{B}(b^\times ,a, b^{\times'})$ and $\mathcal{B}(h)$. 
 
The decision tree is constructed based on the \mbox{Bellman} optimality equation \eqref{eq:bellman star}.  
Each state node~$G(h^\times) = \langle N(h^\times), \hat{V}(h^\times), \mathcal{B}(h^\times)\rangle$ contains $N(h^\times)$ counts the number of times that $h^\times$ has been visited, $\hat{V}(h^\times)$ is the value estimation of $\mathcal{B}(h^\times)$ by the mean return of all simulations starting with $\mathcal{B}(h^\times)$.
Here we use $\mathcal{B}(h^\times)$ to represent $ \mathcal{B}(h)$ where $h$ is inside $h^\times$. 
New nodes are initialized to $\langle \hat{V}_{init}(h^\times), N_{init}(h^\times) ,\mathcal{B}(h^\times)\rangle $ if the knowledge is available, and to $\langle 0, 0 ,\mathcal{B}(h^\times)\rangle$ otherwise. 
The children to state node $G(h^\times)$ are action nodes $G(h^\times a) = \langle N(h^\times, a) , \hat{V}(h^\times a), \mathcal{B}(h^\times)\rangle$. 
The children to the action node are state node $G(h^{\times’})$ where $h^{\times’}$ 
is reachable via action $a$.

MCTS adopts the upper confident bound approach to balance exploration and exploitation. 
In each simulation, actions are selected by $\arg\max_a \big\{\hat{V}(h^\times a) + c \sqrt{\frac{\log N(h^\times)}{N(h^\times,a)}} \big\}$ if we have non-zero values for all the child nodes.
Otherwise, a rollout policy $\pi_{rollout}$ (uniform random action selection) is applied. 
Each simulation will add one new node to the tree. 
When the search is complete, the algorithm returns the optimal action for $h^\times$ with high probability. 

\begin{algorithm}
\caption{\small MCTS for the Product Belief MDP}\label{alg:example}
\begin{algorithmic}
    \small
    \STATE \textbf{Procedure} Search($h^\times = \langle h,q \rangle$)
    \STATE \quad \textbf{repeat}
    \STATE \quad \quad \textbf{if} $h = \emptyset$ \textbf{then} $s \sim s_\mathrm{init}$ \textbf{else} $s \sim \mathcal{B}(h)$
    \STATE \quad \quad \textbf{end if}
    \STATE \quad \quad Simulate($s, h^\times, 0$)
    \STATE \quad \textbf{until} Timeout()
    \STATE \quad \textbf{return} $\arg\max_{a} \hat{V}(h^\times  a)$
    \STATE \textbf{end procedure}
    \vspace{0.5em}
    \STATE  \textbf{Procedure} Rollout($s, h^\times, depth$)
    \STATE \quad \textbf{if} $depth \ge d_{\text{max}}$ \textbf{or} $h^\times = {\mathrm{sink}}$ \textbf{then}
    \STATE \quad \quad \textbf{return} 0
    \STATE \quad \textbf{end if}
    \STATE \quad $a = \pi_{rollout}(\mathcal{B}(h))$, $s' \sim T_\mathcal{P}(s,a,\cdot)$
    \STATE \quad $h^{\times'} \sim T^\times_\mathcal{H}(h^\times ,a, \cdot)$
    \STATE \quad \textbf{return} $r(\mathcal{B}(h)) +\text{Rollout}(s', h, \text{depth}+1)$
    \STATE \textbf{end procedure}
    \vspace{0.5em}
    \columnbreak
    \STATE \textbf{Procedure} Simulate($s, h^{\times}, depth$)
    \STATE \quad \textbf{if} $depth \ge d_{\text{max}}$ \textbf{or} $h^\times = {\mathrm{sink}}$ \textbf{then}
    \STATE \quad \quad \textbf{return} 0
    \STATE \quad \textbf{end if}
    \STATE \quad \textbf{if} $h^\times \notin G$ \textbf{then}
    \STATE \quad \quad \textbf{for all} { $a \in A$}
    \STATE   \quad \quad \quad $G(h^\times, a)\leftarrow \langle \hat{V}_{init}(h^\times), N_{init}(h^\times) ,\mathcal{B}(h^\times)\rangle$
    \STATE \quad \quad \textbf{end for}
    \STATE \quad \quad \textbf{return} Rollout($s, h^\times, depth$)
    \STATE \quad \textbf{end if}
    \STATE \quad $a \leftarrow \arg\max_{d}\left(\hat{V}(h^\times, d) + c \cdot \sqrt{\frac{\log N(h^\times)}{N(h^\times, d)}}\right)$
    \STATE \quad $s' \sim T_\mathcal{P}(s,a,\cdot)$, $h^{\times'} \sim P^\times_\mathcal{H}(h^\times ,a, \cdot)$
    \STATE \quad $R \leftarrow r(\mathcal{B}(h)) + \text{Simulate}(s', h^{\times'}, depth+1)$ 
    \STATE \quad $N(h^\times) \leftarrow N(h) + 1$, $N(h^\times, a) \leftarrow N(h, a) + 1$
    \STATE \quad $\hat{V}(h^\times, a) \leftarrow \hat{V}(h, a) + \frac{R - \hat{V}(h^\times, a)}{N(h^\times, a)}$
    \STATE \quad \textbf{return} $R$
    \STATE \textbf{end procedure}
\end{algorithmic}
\end{algorithm}


\section{Convergence in probability and Scalability}

In this section, we justify the correctness and scalability of our algorithm.

The sc-iLTL objective induces an infinite-horizon reachability problem~\eqref{eqn: value}, which is not compatible with existing MCTS convergence guarantees. These guarantees require either a discount factor or a finite horizon.

To leverage existing MCTS convergence results, we approximate the infinite-horizon reachability problem with a finite-horizon reachability problem. 
We then justify the scalability of our approach by showing that introducing DFA states does not increase the sample complexity.
\vspace{-5pt}
\subsection{Convergence in probability for finite-horizon reachability}
Let $D$ denote the time horizon. The value function at time step $d$ for the finite-horizon reachability problem is defined recursively for $d = D-1, \dots, 0$ as, 
\begin{align}
V_D^{\pi^\times}(b^\times,d) =  \sum_{b^{\times'}} T^\times_\mathcal{B} (b^\times ,\pi(b^\times,d), b^{\times'} ) \, V_{D}^{\pi^\times}(b^{\times'},d+1), \notag
\end{align}
with terminal condition $V_D^{\pi^\times}(b^\times,D) = 1$, if $b^\times \in \mathrm{sink}$ and $V_D^{\pi^\times}(b^\times,D) = 0$ otherwise.  

Although the above \mbox{Bellman} equation is defined over the continuous belief space, we only need the value for states reachable from the initial state $\langle b_0,q_0 \rangle$ within $D$ steps to obtain the optimal policy. 
For a finite-horizon finite-state MDP with an undiscounted but bounded reward function, convergence in probability to the optimal policy using MCTS is guaranteed by Theorem 3,5 in \cite{kocsis2006bandit}. 

\subsection{Extending to infinite-horizon reachability}
Compared to a standard approximation of  $V^\pi$ by extending the horizon $D$ of $V_D^\pi$, in our case, the number of states grows when the horizon $D$ grows. 
However, such approximation still holds since $V_D^\pi$ is non-decreasing and bounded as $D$ increases. 

\begin{lemma}\label{lem: fin to inf}
The finite-horizon value function converges pointwise to the infinite-horizon value function as
\begin{align}
    \lim_{D\to \infty} V_D^\pi (\mathcal{B}(h^\times_t),t) =  V^\pi  (\mathcal{B}(h^\times_t)). 
\end{align}  
Given $\epsilon>0$, there exists a $\Bar{D} \in \mathbb{N}^+$ such that, 
$V^\pi(\mathcal{B}(h^\times_t)) - V_D^\pi (\mathcal{B}(h^\times_t),t)  < \epsilon $ for all $D\ge \Bar{D}$. 
\end{lemma}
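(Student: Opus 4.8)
The plan is to reinterpret both value functions as reachability probabilities and then obtain the limit from continuity of probability measures from below (equivalently, the monotone convergence property), rather than from a contraction argument. A contraction argument is standard for discounted value iteration but is unavailable here, since the reachability reward is undiscounted and the set of reachable product-belief states grows with the horizon $D$, so no single contraction operator on a fixed state space is at hand.

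First, I would fix the policy $\pi$, the start time $t$, and the start state $\mathcal{B}(h^\times_t)$, and consider the single infinite trajectory $\{b^\times_k\}_{k\ge t}$ that $\pi$ generates on $\mathcal{M}_\mathcal{B}^\times$. Unrolling the finite-horizon recursion with its terminal condition $V_D^\pi(\cdot,D)=1$ on $\mathrm{sink}$ and $0$ elsewhere (and noting that the recursion adds no intermediate reward) shows $V_D^\pi(\mathcal{B}(h^\times_t),t)=Pr_\pi(b^\times_D=\mathrm{sink})$, i.e., the probability of being absorbed by step $D$. Next I would invoke Definition~\ref{def:product mdp}: the $\mathrm{sink}$ is absorbing and is entered exactly one step after $F^\times$ is first visited, so the events $E_D:=\{b^\times_D=\mathrm{sink}\}$ are nested, $E_D\subseteq E_{D+1}$, making $V_D^\pi$ non-decreasing in $D$ and bounded above by $1$. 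Since every finite visit to $F^\times$ forces permanent absorption, $\bigcup_D E_D=\{\sigma^\times\models\lozenge F^\times\}$, and Lemma~\ref{lem: V to R} identifies $V^\pi(\mathcal{B}(h^\times_t))$ with $Pr_\pi(\sigma^\times\models\lozenge F^\times)$. Continuity from below then yields $\lim_{D\to\infty}V_D^\pi(\mathcal{B}(h^\times_t),t)=Pr_\pi(\bigcup_D E_D)=V^\pi(\mathcal{B}(h^\times_t))$, the claimed pointwise limit. The $\epsilon$-bound is then immediate, because an increasing sequence converging to its supremum satisfies $V^\pi-V_D^\pi\ge 0$ with $V^\pi-V_{\bar D}^\pi<\epsilon$ for some $\bar D$, and hence $V^\pi-V_D^\pi<\epsilon$ for all $D\ge\bar D$.

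The hard part is conceptual rather than computational: I must argue that the finite-horizon values converge to the true eventual-reachability probability $V^\pi$ and not to some strict under-estimate, despite the absence of discounting and the growth of the reachable state set with $D$. The absorbing-$\mathrm{sink}$ construction is precisely what rules out this gap, since it makes the absorption events nested with union equal to the eventual-reachability event; this is why boundedness and monotonicity, rather than a contraction bound, are the appropriate tools, and it is the step I would be most careful to justify.
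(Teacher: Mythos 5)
Your proof is correct, and its skeleton---monotonicity of $V_D^\pi$ in $D$ together with the bound $V_D^\pi \le 1$---is exactly the mechanism the paper uses. Where you genuinely differ is in the final, and only substantive, step: the paper disposes of it with the single sentence ``by definition of $V^\pi(\mathcal{B}(h^\times_t))$, this lemma holds,'' which asserts that the (existing) limit equals the infinite-horizon value but does not rule out the under-estimation scenario you explicitly worry about. Your argument closes that gap: unrolling the finite-horizon recursion identifies $V_D^\pi(\mathcal{B}(h^\times_t),t)$ with $Pr_\pi(b^\times_D = \mathrm{sink})$, the absorbing-$\mathrm{sink}$ construction of Definition~\ref{def:product mdp} makes the events $E_D$ nested with $\bigcup_D E_D = \{\sigma^\times \models \lozenge F^\times\}$, and continuity from below together with Lemma~\ref{lem: V to R} pins the limit to $V^\pi$ rather than to some smaller number. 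So the paper proves existence of the limit and asserts its identity, while you prove the identity via measure-theoretic continuity from below; the paper's version is shorter, yours is the one that would survive a referee asking why the finite-horizon values cannot converge to something strictly below the reachability probability. One cosmetic slip: your phrase ``the single infinite trajectory that $\pi$ generates'' should be read as the induced probability measure over trajectories (the events $E_D$ live in that trajectory space), which is evidently how you use it.
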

\begin{proof}
    i) $V_D^\pi (\mathcal{B}(h^\times_t),t) \le V_{D+1}^\pi (\mathcal{B}(h^\times_t),t)$. Since any path reaches the $\mathrm{sink}$ taken into account by $V_D^\pi (\mathcal{B}(h^\times_t),t)$ is also taken into account by $V_{D+1}^\pi (\mathcal{B}(h^\times_t),t)$. 
    ii) $V_D^\pi (\mathcal{B}(h^\times_t),t)$ is bounded by $1$. 
    Thus $\lim_{D\to \infty} V_D^\pi (\mathcal{B}(h^\times_t),t)$ exists. 
    iii) By definition of $ V^\pi (\mathcal{B}(h^\times_t))$, this lemma holds. 
\end{proof}

\begin{theorem}\label{proof:MCTS}
Let $d_{max} =  \Bar{D}$ in MCTS, the probability of failure to return an optimal action given the history state $h^\times_t$ converges to zero as $\lim_{N(h^\times_t)\to\infty} Pr(\pi(\mathcal{B}(h^\times_t) \ne a^*(h^\times_t)) = 0$, 
and the estimation error of the value function is bounded by $\vert \mathbb{E}[V^*(\mathcal{B}(h^\times_t))-\hat{V} (h^\times_t)] \vert \le O \big( \frac{\log (N(h^\times_t)}{N(h^\times_t)}\big)$. 
\end{theorem}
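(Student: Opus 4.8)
The plan is to reduce the claim to the consistency and bias results for UCT in \cite{kocsis2006bandit}, which apply to finite-horizon, finite-state MDPs with bounded rewards, and then to lift those conclusions from the truncated problem to the infinite-horizon reachability problem~\eqref{eqn: value} via Lemma~\ref{lem: fin to inf}. Concretely, I would fix the rollout depth to $d_{max}=\bar D$, where $\bar D$ is the horizon furnished by Lemma~\ref{lem: fin to inf} for a prescribed tolerance $\epsilon>0$, and argue in two stages: first that MCTS solves the depth-$\bar D$ problem correctly, and then that this truncation is harmless for recovering the infinite-horizon optimum.

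First I would verify the hypotheses of the UCT theorems for the truncated problem rooted at $h^\times_t$. For a fixed depth $\bar D$, only finitely many product history states are reachable, since each Simulate step in Algorithm~\ref{alg:example} branches over the finite action set $A$ and the finite observation set $\Omega$; thus the search tree realizes a finite-state, finite-horizon MDP. Moreover, every trajectory return is bounded: by the sink construction the reward $r(b^\times)=\mathbf{1}[b^\times\in F^\times]$ is collected at most once, so each return lies in $[0,1]$. These are exactly the assumptions of Theorems~3 and~5 of \cite{kocsis2006bandit}. Applying them to the UCB action selection used in the algorithm yields, for the truncated problem, (i) $\lim_{N(h^\times_t)\to\infty}\Pr\!\big(\pi(\mathcal{B}(h^\times_t))\neq a^*_{\bar D}(h^\times_t)\big)=0$, where $a^*_{\bar D}$ is the depth-$\bar D$ optimal action, and (ii) $\big|\mathbb{E}[\hat V(h^\times_t)]-V^*_{\bar D}(\mathcal{B}(h^\times_t),t)\big|=O\!\big(\tfrac{\log N(h^\times_t)}{N(h^\times_t)}\big)$, with a constant depending on $\bar D$.

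It then remains to replace the truncated optima $a^*_{\bar D}$ and $V^*_{\bar D}$ by the infinite-horizon ones $a^*$ and $V^*$. Lemma~\ref{lem: fin to inf} is stated for a fixed policy, but it extends to the optimal value by a squeeze argument: writing $\pi^*$ for the infinite-horizon optimizer, $V^{\pi^*}_D\le V^*_D\le V^*=V^{\pi^*}$, and since $V^{\pi^*}_D\uparrow V^{\pi^*}$ we obtain $V^*_D\uparrow V^*$ pointwise over the finitely many reachable states. Because $h^\times_t$ has finitely many successors, this lifts to convergence of the root $Q$-values $Q^*_D(h^\times_t,a)\to Q^*(h^\times_t,a)$ uniformly in $a\in A$. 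Assuming the infinite-horizon optimal action is unique with value gap $\Delta>0$, choosing $\bar D$ so that the truncation error is below $\Delta/2$ forces $a^*_{\bar D}(h^\times_t)=a^*(h^\times_t)$, converting (i) into the stated failure-probability claim. For the bias, the triangle inequality together with (ii) and Lemma~\ref{lem: fin to inf} gives $\big|\mathbb{E}[V^*(\mathcal{B}(h^\times_t))-\hat V(h^\times_t)]\big|\le \epsilon+O\!\big(\tfrac{\log N}{N}\big)$.

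The main obstacle is the coupling between horizon and state space: unlike a standard truncation argument, here the reachable state set grows with $\bar D$, so one cannot simply let the MCTS depth tend to infinity and invoke \cite{kocsis2006bandit} directly at infinite horizon. The resolution is to separate the two limits, fixing $\bar D$ first to make the truncation error $\epsilon$ arbitrarily small via Lemma~\ref{lem: fin to inf} and only then sending $N(h^\times_t)\to\infty$, and to secure argmax stability through the value-gap argument above. I would also flag the residual $\epsilon$ in the bias: it is a horizon-truncation term that does not vanish with $N$ for fixed $\bar D$, so the clean $O(\log N/N)$ bound should be read as holding up to this separately controllable truncation error.
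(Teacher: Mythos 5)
Your proposal follows the same route as the paper's own proof: fix $d_{max}=\Bar{D}$, apply Theorems 3 and 5 of \cite{kocsis2006bandit} to the finite-horizon, finite-state truncated problem rooted at $h^\times_t$, and transfer the conclusion to the infinite-horizon reachability problem via Lemma~\ref{lem: fin to inf}. However, your treatment of the transfer step is more careful than the paper's, and the extra care addresses genuine imprecision. The paper's proof asserts the equality $V_{\Bar{D}}^\pi(\mathcal{B}(h^\times_t),t)=V^\pi(\mathcal{B}(h^\times_t))$, whereas Lemma~\ref{lem: fin to inf} only guarantees $V^\pi - V_{\Bar{D}}^\pi < \epsilon$; it also invokes the lemma (stated for a fixed policy $\pi$) as if it applied to the optimal value $V^*$. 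Your squeeze argument extending the lemma to $V^*_D \uparrow V^*$, your value-gap argument showing $a^*_{\Bar{D}}(h^\times_t)=a^*(h^\times_t)$ once the truncation error falls below half the gap (under a uniqueness assumption the paper leaves implicit), and your observation that the bias bound should read $\epsilon + O\big(\tfrac{\log N}{N}\big)$ rather than a clean $O\big(\tfrac{\log N}{N}\big)$ are all steps the paper silently skips. In short: same approach, but your write-up repairs its gaps, and your final caveat is worth retaining — the stated $O\big(\tfrac{\log N}{N}\big)$ bound holds only up to a horizon-truncation error that is controllable by the choice of $\Bar{D}$ but does not vanish as $N(h^\times_t)\to\infty$.
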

\begin{proof}
    By Lemma~\ref{lem: fin to inf}, $V_{\Bar{D}}^\pi (\mathcal{B}(h^\times_t),t) =  V^\pi  (\mathcal{B}(h^\times_t))$. 
    Convergence in probability to the policy maximizing $V_{\Bar{D}}^\pi$ is guaranteed by Theorem 3,5 in \cite{kocsis2006bandit}. 
    Thus, convergence in probability to the policy maximizing $V^\pi$ is guaranteed. 
\end{proof}
Now we finish the proof for our main result, Theorem~\ref{thm:4}. 
Given the result of MCTS on the product belief MDP, we can recover the policy on the POMDP. 

\hspace*{1em}\emph{Proof of Theorem~\ref{thm:4}: }
    The finding of a non-Markovian policy $\pi_1^*$ on the POMDP maximizing the satisfaction probability of the sc-iLTL objective is guaranteed by 
    i) Lemma~\ref{lem:to main} shows such a $\pi_1^*$ can be constructed by a policy $\pi_2^*$ using belief state and automaton state as input, where $\pi_2^*$ 
    maximizes the reachability probability to the DFA accepting states.  
    ii) Theorem~\ref{proof:MCTS} guarantees that MCTS returns $\pi_2^*$. 
\hfill $\blacksquare$

\vspace{-3pt}
\subsection{Scalability}
\vspace{-3pt}
The scalability of our MCTS method matches that of \mbox{PO-UCT} \cite{silver2010montecarlo}, which has been demonstrated to be competitive with state-of-the-art POMDP solvers \cite{kurniawati2022partially} in handling large state spaces. This is because the sample complexity of the method is independent of the size of the belief space. Consequently, augmenting the belief space with the DFA state space does not increase the sample complexity.
The sample complexity is determined by the branching factor $K$ and depth of the tree $D$ as $\mathcal{O}((KD \log(n)+K^D)/ n)$ (Theorem 7 in \cite{kocsis2006bandit}). 
In \mbox{PO-UCT}, where each node represents a history state, the branching factor \( K \) equals the number of available actions multiplied by the number of observations. 
In our case, each action–observation pair deterministically leads to a specific DFA state, so the branching factor \( K \) remains the same. 
The depth \( D \) is a user-defined parameter; setting \( D \ge \Bar{D} \) ensures accuracy, while a smaller \( D \) trades accuracy for efficiency.

\vspace{-5pt}
\section{Case Study}\label{sec:case}
To show the expressiveness of the iLTL objective, we introduce the drone-probing problem. 
We executed 100 \mbox{independent} runs of the experiments and evaluated the performance by the percentage of successful simulations. 

\vspace{-5pt}
\subsection{Drone-probing problem and sc-iLTL objective}
\vspace{-5pt}
The drone-probing problem is in a $4\times 4$ grid world with $256$ hidden states. 
The \textbf{drone} on the grid has an action set $\{\mathrm{N},\mathrm{S},\mathrm{E},\mathrm{W},\mathrm{X}\}$ that stands for moving to the four directions or staying. 
The drone is initialized on the grid $(0,0)$. 
The \textbf{ground target} moves randomly on the grid as we assign equal probability to all the possible following locations of the target. 
The target is initialized on the grid, excluding the corner, with equal probability. 
The \textbf{sensor} is limited as described in section~\ref{sec:example}. 

The sc-iLTL objective allows us to specify the following task. 
Suppose we want the drone to use its imperfect sensor to accurately locate the ground target and then move to the landing zone $(3,3)$. 
Reaching the landing zone before getting an accurate measure is seen as a failure. 
Not reaching the landing zone within the simulation horizon $100$ is seen as a failure. 
We define the control objective as, 
\begin{align}\label{eq:obj}
   \varphi =  &\lozenge\bigg(\bigvee_{i=1}^{256} \mathrm{ineq}_{\mathrm{measure}, i}\bigg) \wedge \lozenge(\mathrm{ineq}_\mathrm{goal}) \notag \\ 
   &\wedge (\neg \mathrm{ineq}_\mathrm{goal} \mathcal{U} \bigvee_{i=1}^{256} \mathrm{ineq}_{\mathrm{measure}, i}), 
\end{align}
where $\Lambda = \{\mathrm{ineq}_{\mathrm{measure}, 1}, \mathrm{ineq}_{\mathrm{measure}, 2}, ...,  \mathrm{ineq}_{\mathrm{measure}, 256},\allowbreak \mathrm{ineq}_\mathrm{goal}\}$ is the set of atomic propositions, 
each $\mathrm{ineq}_{\mathrm{measure}, i}:= p_i ^T b>0.9$ measures the confidence of the location of the ground target, $p_i$ is a vector with all zero elements except the $i$-th entry to be $1$,   
$\mathrm{ineq}_\mathrm{goal} := p_g ^T b \ge 1$ stands for the event of reaching the landing zone. 

Specifying this task with a standard LTL formula defined over the hidden state space is hard. 
Even if the drone is right above the target,  $\Vert b\Vert_\infty$ may still be below~$0.9$. 
 
\subsection{Experiment}
The experiments were conducted using Python on a Windows 11 machine equipped with an Intel i9-14900K processor.
The Mona package \cite{fuggitti-ltlf2dfa} was used to translate the iLTL objective into a DFA. 
The DFA has $4$ states, thus increasing the number of all reachable belief states by $4$ times. 
We executed 100 independent runs. 
In each run, the drone and ground target are spawned based on the initial belief. 
For each MCTS search, we apply $2000$ simulations with tree depth $d_{max}=20$.

Figure~\ref{fig:exp} illustrates algorithm performance using a histogram and the average evolution of $\Vert b \Vert_\infty$.
Out of the 100 runs, 87 were successful.
On average, successful runs required $40.71$ steps to complete the task.
$4$ out of $13$ failed experiments are due to exceeding the horizon. 
Other failed experiments were due to the randomness of MCTS output. 
\begin{figure}[ht]
    \centering
    \includegraphics[width=0.3\textwidth]
    {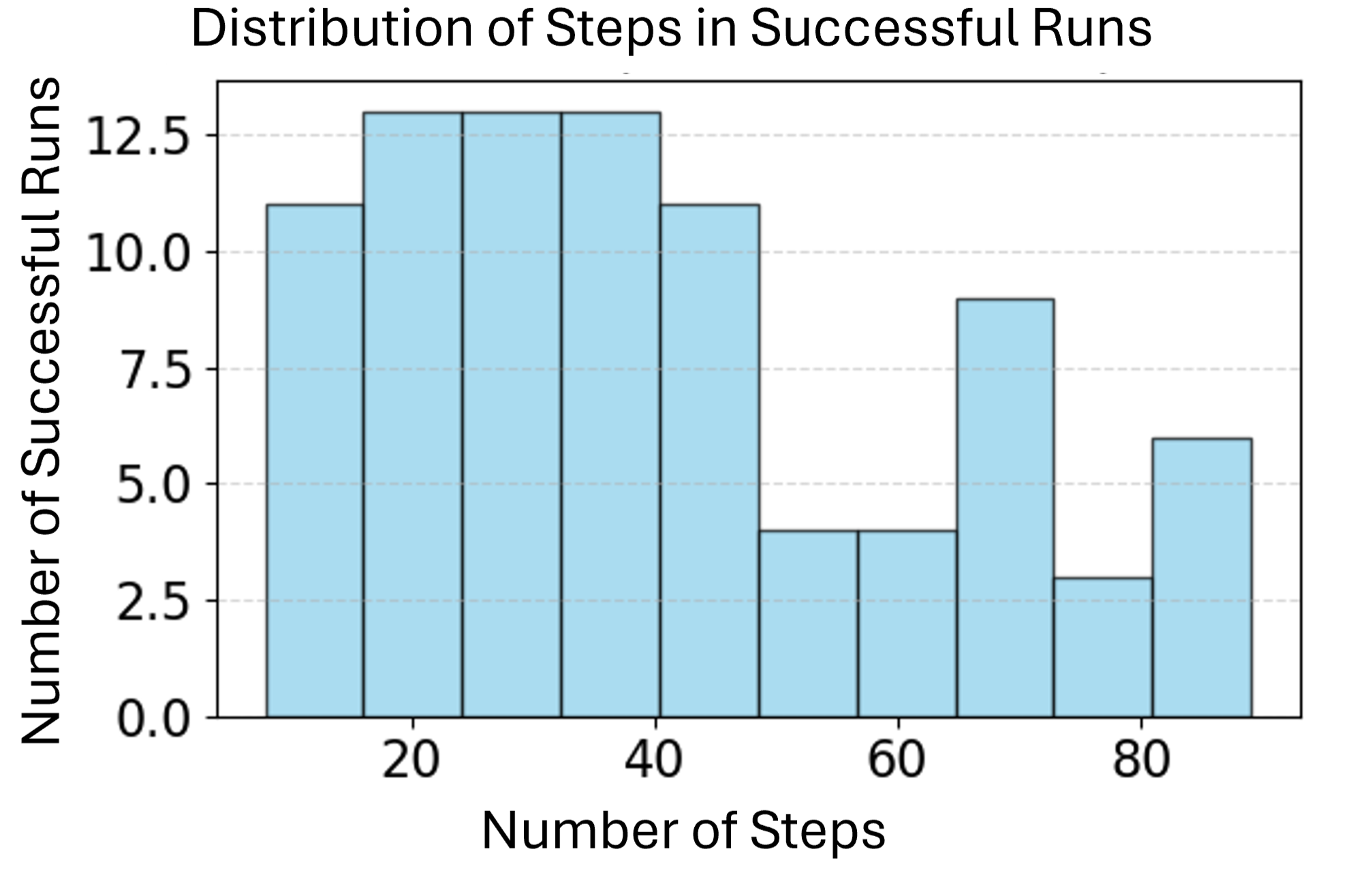}
    \hfill
    \includegraphics[width=0.3\textwidth]{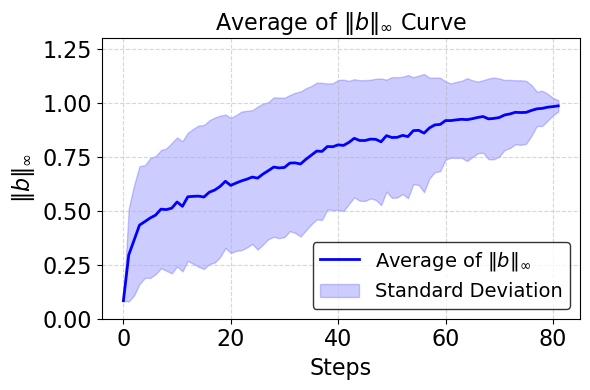}
    \caption{Histograms of steps in successful runs (left) and change of average $\Vert b \Vert_\infty$ during successful runs (right). We fix $\Vert b \Vert_\infty$ once it goes beyond $0.9$. 
    Each action is returned by MCTS using $2000$ simulations with depth $d_{max}=20$. 
    Our MCTS algorithm achieves an $87\%$ success rate for $\varphi$. 
    We see the MCTS algorithm tries to increase $\Vert b \Vert_\infty$ up to the confidence predefined and finishes the task in an average of $40.71$ steps.  
    }
    \label{fig:exp} 
\end{figure}

\section{Conclusion}
This work proposes an sc-iLTL objective for POMDPs, for which the optimal policy can be synthesized via MCTS. 
The iLTL objective defined in belief space is more expressive than the LTL objective defined using hidden states.
Specifically, we utilize inequality in belief space to specify objectives on POMDPs. 
Sc-iLTL is suitable for objectives related to safety and surveillance. 
The transformation of the control for the sc-iLTL objective into a reachability problem on the product belief MDP enables us to leverage MCTS to find optimal policies effectively. 
Experiments in the drone\mbox{-}probing problem demonstrate the expressiveness of the sc\mbox{-}iLTL objective and the performance of our MCTS method. 

\vspace{5pt}
\bibliographystyle{IEEEtran_local}
\bibliography{references_cleaned_for_ieee,IEEEabrv}

\begin{thebibliography}{10}
\providecommand{\url}[1]{#1}
\csname url@rmstyle\endcsname
\providecommand{\newblock}{\relax}
\providecommand{\bibinfo}[2]{#2}
\providecommand\BIBentrySTDinterwordspacing{\spaceskip=0pt\relax}
\providecommand\BIBentryALTinterwordstretchfactor{4}
\providecommand\BIBentryALTinterwordspacing{\spaceskip=\fontdimen2\font plus
\BIBentryALTinterwordstretchfactor\fontdimen3\font minus \fontdimen4\font\relax}
\providecommand\BIBforeignlanguage[2]{{%
\expandafter\ifx\csname l@#1\endcsname\relax
\typeout{** WARNING: IEEEtran.bst: No hyphenation pattern has been}%
\typeout{** loaded for the language `#1'. Using the pattern for}%
\typeout{** the default language instead.}%
\else
\language=\csname l@#1\endcsname
\fi
#2}}

\bibitem{thrun2005probabilistic}
S.~Thrun, W.~Burgard, and D.~Fox, \emph{Probabilistic {{Robotics}} ({{Intelligent Robotics}} and {{Autonomous Agents}})}.\hskip 1em plus 0.5em minus 0.4em\relax The MIT Press, Aug. 2005.

\bibitem{kurniawati2022partially}
H.~Kurniawati, ``Partially observable markov decision processes and robotics,'' \emph{Annual Review of Control, Robotics, and Autonomous Systems}, vol.~5, pp. 253--277, May 2022.

\bibitem{kaelbling1998planninga}
L.~P. Kaelbling, M.~L. Littman, and A.~R. Cassandra, ``Planning and acting in partially observable stochastic domains,'' \emph{Artificial Intelligence}, vol. 101, no.~1, pp. 99--134, May 1998.

\bibitem{kwon2004linear}
Y.~Kwon and G.~Agha, ``Linear inequality ltl (iltl): A model checker for discrete time markov chains,'' in \emph{Proc. Formal Methods and Software Engineering}, 2004.

\bibitem{baier2008principles}
C.~Baier and J.-P. Katoen, \emph{Principles of Model Checking}.\hskip 1em plus 0.5em minus 0.4em\relax {The MIT Press}, 2008.

\bibitem{kupferman2001model}
O.~Kupferman and M.~Y. Vardi, ``Model {{Checking}} of {{Safety Properties}},'' \emph{Formal Methods in System Design}, vol.~19, no.~3, pp. 291--314, Nov. 2001.

\bibitem{lavaei2022automated}
A.~Lavaei, S.~Soudjani, A.~Abate, and M.~Zamani, ``Automated verification and synthesis of stochastic hybrid systems: {{A}} survey,'' \emph{Automatica}, vol. 146, p. 110617, Dec. 2022.

\bibitem{bouton2020pointbased}
M.~Bouton, J.~Tumova, and M.~J. Kochenderfer, ``Point-{{Based Methods}} for {{Model Checking}} in {{Partially Observable Markov Decision Processes}},'' \emph{Proceedings of the AAAI Conference on Artificial Intelligence}, vol.~34, no.~06, pp. 10\,061--10\,068, Apr. 2020.

\bibitem{silver2010montecarlo}
D.~Silver and J.~Veness, ``Monte-{{Carlo Planning}} in {{Large POMDPs}},'' in \emph{Advances in {{Neural Information Processing Systems}}}, vol.~23, 2010.

\bibitem{geffner1998solving}
H.~Geffner and B.~Bonet, ``Solving {{Large POMDPs}} using {{Real Time Dynamic Programming}},'' 1998.

\bibitem{chatterjee2015qualitative}
K.~Chatterjee, M.~Chmel{\'i}k, R.~Gupta, and A.~Kanodia, ``Qualitative analysis of {{POMDPs}} with temporal logic specifications for robotics applications,'' in \emph{2015 {{IEEE International Conference}} on {{Robotics}} and {{Automation}} ({{ICRA}})}, May 2015.

\bibitem{ahmadi2020stochastic}
M.~Ahmadi, R.~Sharan, and J.~W. Burdick, ``Stochastic {{Finite State Control}} of {{POMDPs}} with {{LTL Specifications}},'' no. arXiv:2001.07679.\hskip 1em plus 0.5em minus 0.4em\relax {arXiv}, Jan. 2020.

\bibitem{sharan2014formal}
R.~Sharan, ``Formal methods for control synthesis in partially observed environments: Application to autonomous robotic manipulation,'' Ph.D. dissertation, California Institute of Technology, 2014.

\bibitem{vasile2016control}
C.-I. Vasile, K.~Leahy, E.~Cristofalo, A.~Jones, M.~Schwager, and C.~Belta, ``Control in belief space with {{Temporal Logic}} specifications,'' in \emph{2016 {{IEEE}} 55th {{Conference}} on {{Decision}} and {{Control}} ({{CDC}})}, Dec. 2016.

\bibitem{haesaert2018temporal}
S.~Haesaert, P.~Nilsson, C.~I. Vasile, R.~Thakker, A.~{Agha-mohammadi}, A.~D. Ames, and R.~M. Murray, ``Temporal {{Logic Control}} of {{POMDPs}} via {{Label-based Stochastic Simulation Relations}},'' \emph{IFAC-PapersOnLine}, vol.~51, no.~16, pp. 271--276, Jan. 2018.

\bibitem{jones2013distribution}
A.~Jones, M.~Schwager, and C.~Belta, ``Distribution temporal logic: {{Combining}} correctness with quality of estimation,'' in \emph{52nd {{IEEE Conference}} on {{Decision}} and {{Control}}}, Dec. 2013.

\bibitem{svorenova2015temporal}
M.~Svore{\v n}ov{\'a}, M.~Chmel{\'i}k, K.~Leahy, H.~F. Eniser, K.~Chatterjee, I.~{\v C}ern{\'a}, and C.~Belta, ``Temporal logic motion planning using {{POMDPs}} with parity objectives: Case study paper,'' in \emph{{{HSCC}} '15}, Apr. 2015.

\bibitem{lacerda2014optimal}
B.~Lacerda, D.~Parker, and N.~Hawes, ``Optimal and dynamic planning for {{Markov}} decision processes with co-safe {{LTL}} specifications,'' in \emph{2014 {{IEEE}}/{{RSJ International Conference}} on {{Intelligent Robots}} and {{Systems}}}, Sept. 2014.

\bibitem{li2024reinforcement}
J.~Li, ``Reinforcement learning-based motion planning in partially observable environments for complex tasks,'' Ph.D. dissertation, University of Iowa, 2024.

\bibitem{kocsis2006bandit}
L.~Kocsis and C.~Szepesv{\'a}ri, ``Bandit {{Based Monte-Carlo Planning}},'' in \emph{Machine {{Learning}}: {{ECML}} 2006}, 2006.

\bibitem{fuggitti-ltlf2dfa}
N.~Klarlund and A.~M{\o}ller, ``\BIBforeignlanguage{English}{Mona version 1.4 user manual},'' \emph{\BIBforeignlanguage{English}{B R I C S Notes Series}}, no. NS-01-1, 2001.

\end{thebibliography}
\end{document}